\documentclass[letterpaper, 10 pt, conference]{ieeeconf}

\IEEEoverridecommandlockouts            

\overrideIEEEmargins                                   
\usepackage{graphics} 
\usepackage{epsfig}
\usepackage{times} 
\usepackage{amsthm}
\usepackage{amsmath}
\usepackage{amssymb}
\usepackage{balance}
\usepackage{dsfont}
\usepackage{algorithm}
\usepackage{algpseudocode}
\usepackage{subfig}
\usepackage{setspace}
\usepackage{multirow}
\usepackage{enumerate}
\usepackage{tabularx}
\usepackage{balance}
\usepackage[noadjust]{cite}
\usepackage{thmtools}

\usepackage{bm}
 %optional

\newtheorem{remark}{Remark}
\newtheorem{theorem}{Theorem}
\def\p(#1|#2){p(#1\,|\,#2)}

\let\Algorithm\algorithm
\renewcommand\algorithm[1][]{\Algorithm[#1]\setstretch{1.2}}

\title{\LARGE \bf
Optimal Power Management of Battery Energy Storage Systems via Ensemble Kalman Inversion}

%\thanks{*This work was not supported by any organization}% <-this % stops a space
\author{Amir Farakhor$^{1}$, Iman Askari$^{1}$, Di Wu$^{2}$, and Huazhen Fang$^{1}$% <-this % stops a space
\thanks{$^{1}$A. Farakhor, I. Askari and H. Fang are with the Department of Mechanical Engineering, University of Kansas, Lawrence, KS, USA.
        Email: {\tt\small \{a.farakhor, askari, fang\}@ku.edu}}%
\thanks{$^{2}$D. Wu is with the Pacific Northwest National Laboratory, Richland, WA, USA.
        Email: {\tt\small di.wu@pnnl.gov}}%
%\thanks{$^{2}$Y. Wang is with the Mitsubishi Electric Research Laboratories, Cambridge, MA, USA.
        %{\tt\small yebinwang@merl.com}}%
}

\begin{document} 

\maketitle
\thispagestyle{empty}
\pagestyle{empty}

\begin{abstract}
Optimal power management of battery energy storage systems (BESS) is crucial for their safe and efficient operation. Numerical optimization techniques are frequently utilized to solve the optimal power management problems. However, these techniques often fall short of delivering real-time solutions for large-scale BESS due to their computational complexity. To address this issue, this paper proposes a computationally efficient approach. We introduce a new set of decision variables called {\em power-sharing ratios} corresponding to each cell, indicating their allocated power share from the output power demand. We then formulate an optimal power management problem to minimize the system-wide power losses while ensuring compliance with safety, balancing, and power supply-demand match constraints. To efficiently solve this problem, a parameterized control policy is designed and leveraged to transform the optimal power management problem into a parameter estimation problem. We then implement the ensemble Kalman inversion to estimate the optimal parameter set. The proposed approach significantly reduces computational requirements due to 1) the much lower dimensionality of the decision parameters and 2) the estimation treatment of the optimal power management problem. Finally, we conduct extensive simulations to validate the effectiveness of the proposed approach. The results show promise in accuracy and computation time compared with explored numerical optimization techniques.
\end{abstract}

\section{Introduction}
Battery energy storage systems (BESS) have found widespread use in various applications ranging from small portable electronics to large-scale battery packs for electric vehicles and grid energy storage \cite{TIE-AA-2005,JPS-DR-2015}. These BESS comprise many battery cells connected in series/parallel to deliver the required output voltage/capacity requirements. BESS power management commonly refers to power distribution among these constituent cells for the sake of multiple operational objectives. Conventional power management strategies mainly focus on state-of-charge (SoC) balancing to maximize the BESS utilization \cite{IECON-AB-2018}. However, recent studies have pointed out further potentials that lie in optimal power management, including temperature balancing and output voltage regulation \cite{IVPPC-BJV-2014}. Despite the evident advantages of optimal power management, its practical implementation presents a nontrivial challenge. The underlying optimization problem can quickly reach a formidable level of computational complexity with an increasing number of cells, rendering real-time execution infeasible. While different strategies have been explored in the literature to alleviate computational requirements, optimal power management remains hardly attainable for large-scale BESS. In this paper, we propose a novel perspective on optimal power management, leveraging Bayesian estimation to significantly reduce computation. Next, we will review the existing literature on optimal power management.
\subsection{Literature Review}
Optimal power management brings about several key operational features, including power loss minimization, cell balancing, and charge/discharge control. To fully harness these functionalities, the literature has devised various optimization problems. Conventionally, the focus has centered around SoC balancing to maximize the BESS utilization. For instance, the study in \cite{ECC-PM-2013} introduces a linear program aimed at achieving SoC balancing either in minimum time or with minimum power loss. This work is subsequently expanded in \cite{ITEC-RG-2015} to enable inter- and intra-module level SoC balancing. Further, in \cite{IFAC-NM-2012}, a convex optimization problem is formulated to achieve multiple objectives of cell balancing and power loss minimization. Building upon this work, the studies in \cite{ITSE-PC-2016, ITVT-CR-2019} extend this formulation to accommodate other BESS configurations. We have also tailored a convex optimization problem for a reconfigurable BESS in our previous study \cite{TTE-FA-2023}. The study in \cite{ITCST-CR-2021} also formulates a hierarchical model predictive control for optimal power management. %In the work presented in \cite{TIE-WZ-2022}, optimal power management is employed to facilitate user-defined charging strategies while considering constraints related to cell balancing and temperature management.

%\cite{TII-OQ-2018}

In hindsight, optimal power management approaches leverage the power of their underlying optimization problems to enhance BESS performance. These problems are typically solved through computationally intensive numerical techniques. However, computational demands of these methods become enormous when large-scale BESS presents large numbers of decision variables, thus hindering the practical implementation. 

Distributed control is a valuable approach to managing large-scale systems. This method involves distributing control tasks and computations among the constituent units or agents of the system. This approach provides improved efficiency and scalability, making it a useful choice for BESS power control. Recent studies in \cite{TPEL-MT-2016,ITSE-OQ-2018} treat individual cells as autonomous agents and employed the concept of distributed average consensus within networked multi-agent systems to develop algorithms for SoC balancing. However, these methods have limitations in optimizing critical metrics such as power losses, making them non-optimal. Our previous research in \cite{ACC-FA-2023} also proposes an innovative distributed optimal power management approach that is specifically designed for large-scale BESS.
\subsection{Summary of Contributions}
Despite efforts to reduce computational complexity, optimal power management for large-scale BESS applications still remains beyond the reach of existing methods. Towards overcoming the situation, our paper delivers the following contributions:

\begin{enumerate}[1)]
	\item We introduce power-sharing ratios for the cells which indicate their power quota from the demanded output power. This facilitates formulating an optimal power management problem, leading to two benefits: firstly, the optimal power-sharing ratios depend solely on current cell conditions such as SoC, temperature, and internal resistance. This allows us to propose a parameterized control policy to reduce the number of decision variables remarkably. Secondly, it streamlines the Bayesian estimation treatment of the optimal power management problem by limiting the search space of control variables.
	\item We formulate and convert an optimal power management problem into a Bayesian parameter estimation problem. Avoiding tedious numerical optimization, we instead perform estimation to enable optimal power management. We specifically leverage ensemble Kalman inversion to estimate the optimal parameter set for the proposed control policy. The estimation framework substantially reduces computation time, making optimal power management potentially feasible and scalable for real-world applications.
\end{enumerate}

We validate our approach through extensive simulations, demonstrating its efficacy in terms of accuracy and computation time.
\section{Optimal Power Management of BESS}
This section elaborates on the BESS circuit structure, its electro-thermal modeling, and optimal power management formulation. 
\subsection{Circuit Structure}
Fig.~\ref{FIG_1} illustrates the circuit structure of the considered BESS. It comprises $n$ battery cells. Each cell is equipped with its corresponding DC/DC converter, which can be interconnected in series or parallel configurations to supply the load under specific voltage/capacity ratings. The DC/DC converters allow bidirectional power processing to charge/discharge the cells with regulated power. This brings about cell-level power control for the considered BESS. This paper leverages this feature to enable cell balancing and power loss minimization. It is worth mentioning that the proposed approach in this work extends to other circuit structures capable of cell-level power control. The following subsection provides an electro-thermal model for the BESS to pave the path for the problem formulation. 

%Our previous studies in \cite{IECON-FA-2021,TTE-FA-2023} consider a reconfigurable connection among the converters. However, we consider a fixed configuration here to focus solely on optimal power management.

\begin{figure}[!t]\centering
	\includegraphics[width=8.5cm]{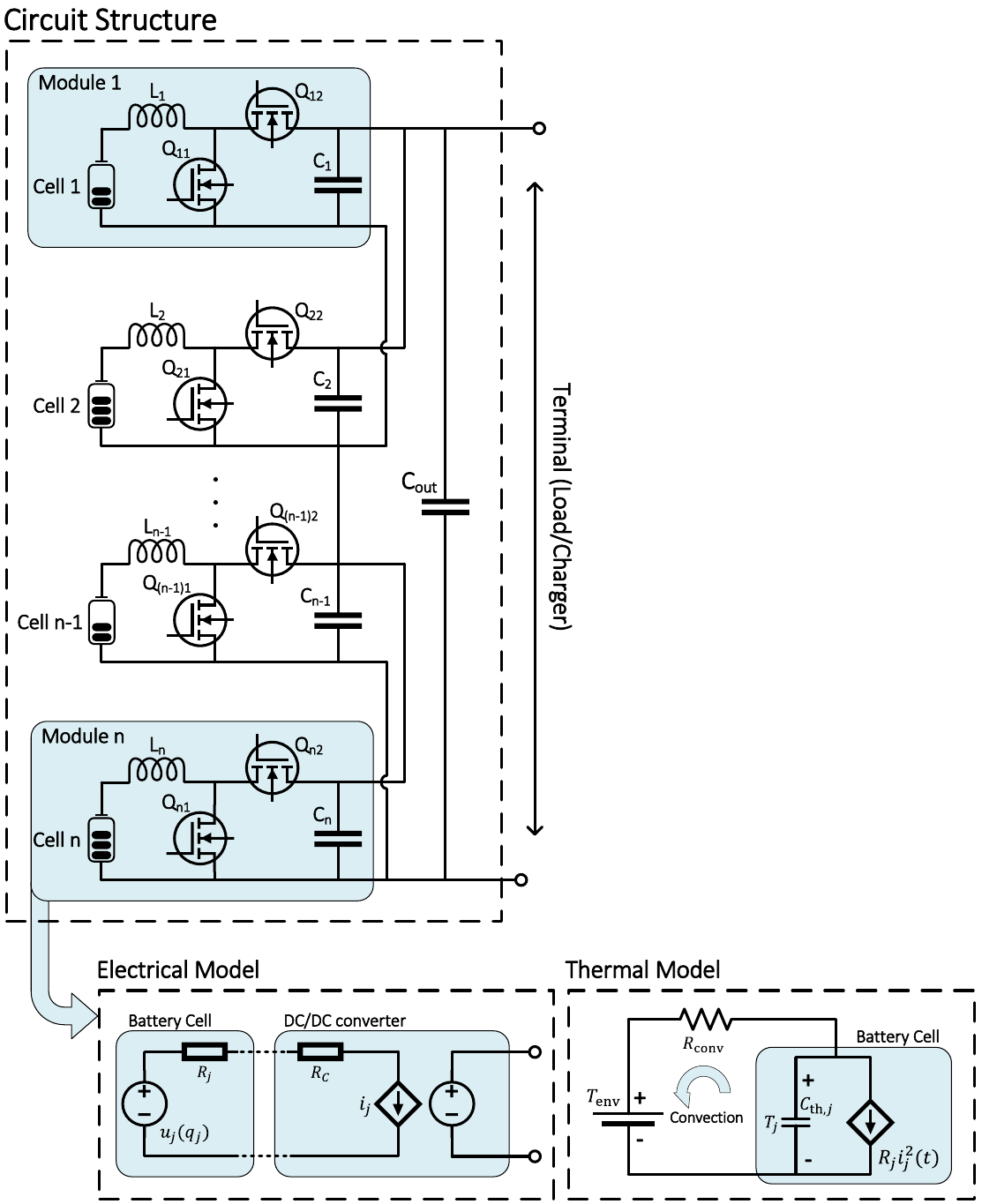}
	\caption{The circuit structure of large-scale BESS.}
	\label{FIG_1}
\end{figure}
\subsection{Electro-thermal Modeling}
We begin by characterizing the electrical dynamics of the considered BESS. Each cell is described by the Rint model, which encompasses a SoC-dependent open-circuit voltage (OCV) and a series resistance \cite{ENE-HH-2011}. Further, we model the DC/DC converters using an ideal DC transform and a series resistance to capture their inherent inefficiencies. The electrical model is illustrated in Fig.~\ref{FIG_1}. For cell $j$, the governing equations are given by
\begin{subequations}
	\begin{align}
		\dot{q}_j(t)&=-\frac{1}{Q_j}i_{j}(t), \label{SoC_Dynamics}\\
		v_j(t)&=u_j(q_j(t))-R_ji_{j}(t),
	\end{align}
\end{subequations}
where $v_j$, $u_j$, $R_j$, $i_j$, $Q_j$, $q_j$ are the terminal voltage, OCV, internal resistance, current, capacity, and SoC, respectively. The cell's internal charging/discharging power $P_{b_j}$ is defined as follows:
\begin{align}
	P_{b_j}=u_j(q_j(t))i_{j}(t).
\end{align}
Further, the cell’s and its corresponding converter’s power losses are expressed by
\begin{align}
	P_{L_j}=(R_j + R_C)i_{j}^2(t).
	\label{Loss}
\end{align}
For the cells, let us introduce the power-sharing ratios. This quantity indicates the portion of power allocated to each cell from the total output power demand. They are defined as
\begin{align}
	\mu_j=\frac{P_{b_j}}{|P_{\textrm{out}}|},
	\label{Util}
\end{align}
where $\mu_j$ and $P_{\textrm{out}}$ are the power-sharing ratio and output power demand, respectively. Note that we consider the absolute value of $P_{\textrm{out}}$ in \eqref{Util} to accommodate both charging and discharging cycles. Proceeding forward, we rewrite \eqref{SoC_Dynamics} and \eqref{Loss} to represent them in terms of $\mu_j$. The new expressions are as follows:
\begin{subequations}
	\begin{align}
		\dot{q}_j(t)&=-\frac{\mu_j(t)P_{\textrm{out}}(t)}{Q_ju_j(q_j(t))}, \label{SoC_Dynamics_Util}\\
		P_{L_j}&=\frac{(R_j + R_C)\mu_j^2(t)P_{\textrm{out}}^2(t)}{u_j^2(q_j(t))}.
	\end{align}
	\label{Elec_Dynamic}
\end{subequations}

Next, we describe the cells' thermal behavior using a lumped thermal circuit. The temperature of the cells is influenced by heat generation due to internal power losses, and heat dissipation due to environmental convection. For cell $j$, the governing dynamic equation is
\begin{align}
	C_{\textrm{th},j}\dot{T}_j(t) = \frac{R_j\mu_j^2P_{\textrm{out}}^2(t)}{u_j^2(q_j(t))} - (T_j(t) - T_{\textrm{env}})/R_{\textrm{conv}},
	\label{Temp_Dynamic}
\end{align}
where $C_{\textrm{th},j}$, $T_j$, $T_{\textrm{env}}$, and $R_{\textrm{conv}}$ are the thermal capacitance, cell's temperature, environmental temperature, and convective thermal resistance, respectively. The collective electro-thermal model, described in \eqref{Elec_Dynamic} and \eqref{Temp_Dynamic}, strikes an appropriate balance between expressiveness and computational tractability to pave the path toward optimal power management formulation.
\subsection{Problem Formulation}
Here, we introduce the proposed optimal power management approach based on the power-sharing ratios. The main objective of our approach is to achieve a power-loss-minimized BESS operation while adhering to safety, balancing, and power supply-demand match requirements. The formulation in the sequel is represented in discrete time after discretizing the electro-thermal model via the forward Euler method.

To begin with, we express the BESS power losses at time instant $t$ as follows: 
\begin{equation}
	J[t] = \sum_{j=1}^n \frac{(R_j + R_C)\mu_j^2[t]P_{\textrm{out}}^2[t]}{u_j^2(q_j[t])}.
\end{equation}
To ensure the safe operation of the BESS, we impose the following constraints:
\begin{subequations}
	\begin{align}
		i_{j}^\textrm{min}&\leq i_j\leq i_{j}^\textrm{max}, \label{SafetyConstraint}\\
		q_j^\textrm{min}&\leq q_j\leq q_j^\textrm{max}, \label{SoCConstraint}
	\end{align}
\end{subequations}
where $i_{j}^\textrm{min/max}$ and $q_j^\textrm{min/max}$ are the lower/upper safety bounds for the cells' current and SoC. Note that we can express \eqref{SafetyConstraint} in terms of the power-sharing ratios as follows: 
\begin{equation}
	\frac{u_j(q_j[t])}{\bigl|P_{\textrm{out}}[t]\bigr|}i_{j}^\textrm{min}\leq \mu_j[t]\leq \frac{u_j(q_j[t])}{\bigl|P_{\textrm{out}}[t]\bigr|}i_{j}^\textrm{max}.
	\label{SafetyConstraint-Util}
\end{equation}
To promote balanced cell use, we further enforce constraints on BESS operation as follows:
\begin{subequations}
	\begin{align}
		\bigl|q_j[t]-q_{\textrm{avg}}[t]\bigr|& \leq \Delta q, \label{SoCBalancing} \\
		\bigl|T_j[t]-T_{\textrm{avg}}[t]\bigr|& \leq \Delta T, \label{TempBalancing}
	\end{align}
	\label{BalancingConstraint}
\end{subequations}
\hspace{-5pt}where $q_{\textrm{avg}}[t]$ and $T_{\textrm{avg}}[t]$ represent the average SoC and temperature, and $\Delta q$ and $\Delta T$ are the allowed deviation of the cells' SoC and temperature from their respective average values. The power supply-demand balance can also be guaranteed using the following constraint:
\begin{equation}
	\sum_{j=1}^n \left(\mu_j[t] - \frac{(R_j+R_C)\mu_j^2[t]\bigl|P_{\textrm{out}}[t]\bigr|}{u_j^2(q_j[t])}\right) = 1.
	\label{PowerBalanceConstraint}
\end{equation}
Having laid out the power loss expression and the constraints, we can now proceed to formulate the optimal power management problem over a receding horizon as follows:
\begin{equation}
	\begin{aligned}
		\min_{\mu_{j}, j=1,...,n} \quad & \sum_{t=k}^{k+H}J[t],\\
		\textrm{s.t.} \quad & \eqref{SoC_Dynamics_Util},\eqref{Temp_Dynamic},\eqref{SoCConstraint},\eqref{SafetyConstraint-Util},\eqref{BalancingConstraint},\eqref{PowerBalanceConstraint},
	\label{Optim-1}
	\end{aligned}
\end{equation}
where $H$ is the horizon length. Note that the optimization problem stated in \eqref{Optim-1} is both non-linear and non-convex, and requires optimization of $3nH - n$ variables. While it may be feasible to manage its computation for small $n$, it can quickly become extremely challenging for large-scale BESS. Therefore, we aim to develop a computationally efficient solution for \eqref{Optim-1} in the next section.
\section{Efficient Optimal Power Management Approach}
This section introduces a parameterized control policy for the optimal power management problem and then translates it into a Bayesian parameter estimation problem. Finally, we leverage the ensemble Kalman inversion for parameter estimation.

\subsection{Input Parameterization}
As previously discussed, the computational complexity of \eqref{Optim-1} hinders the adoption of optimal power management for large-scale BESS. To alleviate this issue, we propose a parameterized control policy based on the BESS operational conditions. By doing so, we intend to map a high-dimensional control space, power-sharing ratios, into a low-dimensional parameter space. Before introducing the control policy, we explain the rationale behind this design with an example. Let us consider a BESS with uniform SoC distribution except for one cell with a relatively lower SoC value. To achieve a balance, we require this cell to be allocated more/less power in charging/discharging cycles. We also need analogous behavior for the cells’ temperatures and internal resistances. We thus parameterize the control policy as follows to generate such behaviors: 

\begin{equation}
	\mu_j[k] = \begin{bmatrix}\theta_1 & \theta_2 & 1-\theta_1-\theta_2\end{bmatrix}^\top \begin{bmatrix}l_{q, j} \\ l_{T, j} \\ l_{R, j} \end{bmatrix},
	\label{Form-1}
\end{equation}
where $\theta_1$ and $\theta_2$ are the parameters; $l_{q, j}$, $l_{T, j}$, and $l_{R, j}$ are also defined as follows:
\begin{subequations}
	\begin{align}
		l_{q,j} &=\begin{cases}\sum_{i=1}^{n}\left(\frac{q_j}{q_i}\right)^{-\beta_1} & P_{\textrm{out}} \geq 0 \\ \sum_{i=1}^{n}\left(\frac{q_i}{q_j}\right)^{-\beta_1} & P_{\textrm{out}} < 0\end{cases}, \label{q-Function}\\
		l_{T,j} &= \sum_{i=1}^{n}\left(\frac{T_j}{T_i}\right)^{-\beta_2}, \label{T-Function} \\
		l_{R,j} &= \left(\sum_{i=1}^{n}\frac{R_j}{R_i}\right)^{-1}, \label{R-Function}
	\end{align}
	\label{Functions}
\end{subequations}
\hspace{-4pt}where $\beta_1$ and $\beta_2$ are the hyperparameters. These hyperparameters play a crucial role in the SoC and temperature balancing performance, as they amplify the impact of imbalance on the power-sharing ratios. Specifically, higher values of $\beta$ lead to greater differentiation in power allocation among the cells. One can select $\beta_1$ and $\beta_2$ based on BESS specifications.

\begin{remark}
The control policy proposed in \eqref{Form-1} maintains optimality regarding power loss minimization. This is because \eqref{R-Function} is indeed the solution of \eqref{Optim-1} when the safety and balancing constraints \eqref{SoCConstraint}, \eqref{SafetyConstraint-Util}, and \eqref{BalancingConstraint} are not active. By setting $\theta_1=\theta_2=0$, the designed control policy reflects the absence of balancing constraints. Thus, the resultant power-sharing ratios achieve power-loss-minimizing operation.
\end{remark}

With the above control policy parameterization, the number of optimization variables reduce to $2nH+2$, in contrast to $3nH-n$ in the original problem. We will make further effort next to find out the best optimization variables with fast computation.
\subsection{Bayesian Estimation Formulation}
We leverage the designed control policy in \eqref{Form-1} to translate \eqref{Optim-1} into a Bayesian parameter estimation problem. The estimation framework centers around utilizing the cost function (power losses) and the constraints (safety, balancing, and power conservation) to collectively guide the inference of optimal parameter set. The primary objective of this transformation is twofold. Firstly, it provides us access to sophisticated tools developed in the field of estimation. Secondly, it leads to a significant reduction in computation. We begin the transformation by introducing a virtual dynamic system. While doing so, we use simplified notations for compact representations and adopt the convention of using boldface lowercase letters for vectors and boldface uppercase letters for matrices. The virtual dynamics are expressed as follows:
\begin{equation}
	\begin{cases}
		\boldsymbol{x}[t+1] = \boldsymbol{f}\bigl(\boldsymbol{x}[t], \boldsymbol{\pi}_{\boldsymbol{\theta}}(\boldsymbol{x}[t])\bigr), \\
		y[t] = h\bigl(\boldsymbol{x}[t], \boldsymbol{\pi}_{\boldsymbol{\theta}}(\boldsymbol{x}[t])\bigr) + v[t], \\
	\end{cases}
	\label{VirtualDynamics}
\end{equation}
where $\boldsymbol{x}[t]$, $\boldsymbol{\mu}[t]$, and $\boldsymbol{\theta}$ collect the cells’ SoC and temperature values, power-sharing ratios, and parameters as follows:
\begin{align}
	\boldsymbol{x}[t] &= \begin{bmatrix}q_1[t] & \dots & q_n[t] & T_1[t] & \dots & T_n[t]\end{bmatrix}^\top, \nonumber\\
	\boldsymbol{\mu}[t] &= \begin{bmatrix}\mu_1[t] & \dots & \mu_n[t]\end{bmatrix}^\top, \nonumber\\
	\boldsymbol{\theta} &= \begin{bmatrix}\theta_1 & \theta_2 & 1 - \theta_1 - \theta_2\end{bmatrix}^\top. \nonumber
\end{align}
Further, $\boldsymbol{f}$ denotes the discretized dynamic equations in \eqref{SoC_Dynamics_Util} and \eqref{Temp_Dynamic} for all cells; $\boldsymbol{\mu}[t]=\boldsymbol{\pi}_{\boldsymbol{\theta}}(\boldsymbol{x}[t])$ represents the parameterized control policy in \eqref{Form-1}. The virtual measurement $y[t]$ is characterized by
\begin{align}
	h\bigl(\boldsymbol{x}[t], \boldsymbol{\pi}_{\boldsymbol{\theta}}(\boldsymbol{x}[t])\bigr) = J[t] + \psi\bigl(g\left(\boldsymbol{x}[t], \boldsymbol{\pi}_{\boldsymbol{\theta}}(\boldsymbol{x}[t])\right)\bigr), 
\end{align}
where $g[t]$ gathers the constraints in \eqref{SoCConstraint}, \eqref{SafetyConstraint-Util}, \eqref{BalancingConstraint}, \eqref{PowerBalanceConstraint} for all cells at time $t$. Also, $\psi(\cdot)$ is a barrier function to measure the constraint satisfaction as follows:
\begin{align}
	\psi(x) = \begin{cases} 0 & x \leq 0 \\ \infty & x > 0\end{cases}.
	\label{Barrier}
\end{align}
In this framework, we relax the virtual measurement $y[t]$ to be stochastic by introducing a random noise variable $v[t]$. 

We highlight that the virtual system replicates the dynamics of the original system as in \eqref{Optim-1}. The only difference is that the hard constraints in \eqref{Optim-1} are relaxed to a soft constraint and added as a penalty term in the virtual measurements. Furthermore, it is crucial for the behavior of virtual measurements to mirror the characteristics of the optimization problem in \eqref{Optim-1}. This necessitates that the virtual measurements conform to the form where $y[t] = 0$ holds for $t=k,\dots,k+H$. Requiring $y[t] = 0$ drives the BESS operation close to the ideal case characterized by negligible power losses and constraint satisfaction.

We are now ready to perform parameter estimation given the virtual dynamics in \eqref{VirtualDynamics}. This involves the consideration of the posterior distribution $\p(\boldsymbol{\theta} | \boldsymbol{y}=\boldsymbol{0})$ with the concatenated measurements from $t=k,\dots,k+H$, i.e.,
\begin{equation}
    \boldsymbol{y}=\begin{bmatrix}y[k] & \dots & y[k+H]\end{bmatrix}^\top.
    \nonumber
\end{equation}
To estimate $\boldsymbol{\theta}$, we apply the maximum a posteriori (MAP) estimation:
\begin{equation}
	\boldsymbol{\hat{\theta}}^*= \textrm{arg}\max_{\boldsymbol{\theta}} \, \, \log \p(\boldsymbol{\theta} | \boldsymbol{y}=\boldsymbol{0}).
	\label{MAP}
\end{equation}

\begin{theorem}
Assume that $p(\boldsymbol{\theta})\sim \mathcal{N}(\boldsymbol{\bar{\theta}}, \boldsymbol{\Sigma}
^{\boldsymbol{\theta}})$ and that $p(v)\sim\mathcal{N}(0, R)$ in \eqref{VirtualDynamics} is white noise.
%\begin{equation}
%	p(\boldsymbol{\theta})\sim \mathcal{N}(\boldsymbol{\bar{\theta}}, \boldsymbol{\Sigma}
%^{\boldsymbol{\theta}}), \, \, p(v)\sim\mathcal{N}(0, R),
%	\nonumber
%\end{equation}
Then, the problems in \eqref{Optim-1} and \eqref{MAP} share the same optima when $\boldsymbol{\theta}$ has a noninformative prior, i.e., $\boldsymbol{\Sigma}^{\boldsymbol{\theta}} \rightarrow \infty$.
\end{theorem}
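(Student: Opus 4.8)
The plan is to get from \eqref{MAP} to \eqref{Optim-1} in three moves: split the log-posterior by Bayes' rule, let the prior flatten so that the MAP estimate collapses to a maximum-likelihood estimate, and then evaluate the likelihood at $\boldsymbol{y}=\boldsymbol{0}$ and recognize the constrained power-loss objective of \eqref{Optim-1}. For the first two moves, write $\log\p(\boldsymbol{\theta}|\boldsymbol{y}=\boldsymbol{0})=\log\p(\boldsymbol{y}=\boldsymbol{0}|\boldsymbol{\theta})+\log p(\boldsymbol{\theta})-\log p(\boldsymbol{y}=\boldsymbol{0})$, the last term being constant in $\boldsymbol{\theta}$. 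The Gaussian prior contributes $\log p(\boldsymbol{\theta})=-\tfrac{1}{2}(\boldsymbol{\theta}-\boldsymbol{\bar{\theta}})^{\top}(\boldsymbol{\Sigma}^{\boldsymbol{\theta}})^{-1}(\boldsymbol{\theta}-\boldsymbol{\bar{\theta}})+\mathrm{const}$; in the noninformative limit $\boldsymbol{\Sigma}^{\boldsymbol{\theta}}\rightarrow\infty$ the precision $(\boldsymbol{\Sigma}^{\boldsymbol{\theta}})^{-1}\rightarrow\boldsymbol{0}$, so the prior term becomes $\boldsymbol{\theta}$-independent and \eqref{MAP} reduces to $\boldsymbol{\hat{\theta}}^{*}=\arg\max_{\boldsymbol{\theta}}\log\p(\boldsymbol{y}=\boldsymbol{0}|\boldsymbol{\theta})$.

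For the likelihood, note that once $\boldsymbol{\theta}$ is fixed the virtual trajectory $\boldsymbol{x}[k],\dots,\boldsymbol{x}[k+H]$ generated by \eqref{VirtualDynamics} is deterministic, because $\boldsymbol{f}$ carries no process noise and $\boldsymbol{\pi}_{\boldsymbol{\theta}}$ is a deterministic policy; hence each $h(\boldsymbol{x}[t],\boldsymbol{\pi}_{\boldsymbol{\theta}}(\boldsymbol{x}[t]))=J[t]+\psi(g[t])$ is a deterministic function of $\boldsymbol{\theta}$, and since $v[t]\sim\mathcal{N}(0,R)$ is white the stacked measurement satisfies $\boldsymbol{y}\,|\,\boldsymbol{\theta}\sim\mathcal{N}(\boldsymbol{h}(\boldsymbol{\theta}),R\boldsymbol{I})$, where $\boldsymbol{h}(\boldsymbol{\theta})$ stacks $h(\boldsymbol{x}[t],\boldsymbol{\pi}_{\boldsymbol{\theta}}(\boldsymbol{x}[t]))$ over $t=k,\dots,k+H$. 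Therefore
\[
\log\p(\boldsymbol{y}=\boldsymbol{0}|\boldsymbol{\theta})=-\frac{1}{2R}\sum_{t=k}^{k+H}\bigl(J[t]+\psi(g[t])\bigr)^{2}+\mathrm{const},
\]
and the maximum-likelihood estimate minimizes $\sum_{t=k}^{k+H}\bigl(J[t]+\psi(g[t])\bigr)^{2}$. By \eqref{Barrier}, $\psi(g[t])$ equals $0$ exactly when all of \eqref{SoCConstraint}, \eqref{SafetyConstraint-Util}, \eqref{BalancingConstraint}, \eqref{PowerBalanceConstraint} hold at time $t$ and $+\infty$ otherwise, so this objective is finite precisely on those $\boldsymbol{\theta}$ whose policy-induced trajectory stays feasible for \eqref{Optim-1}, and on that set it collapses to $\sum_{t=k}^{k+H}J[t]^{2}$. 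Hence the maximum-likelihood problem is the feasibility-restricted minimization of $\sum_{t}J[t]^{2}$, with the barrier silently reinstating every hard constraint of \eqref{Optim-1}.

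The last (and main) step is to identify the objective $\sum_{t}J[t]^{2}$ with the linear objective $\sum_{t}J[t]$ of \eqref{Optim-1} over the common feasible set. The route I would take invokes Remark~1: since \eqref{Form-1} already embeds the loss-minimizing allocation \eqref{R-Function} (attained at $\theta_1=\theta_2=0$), moving $\boldsymbol{\theta}$ away from that point only introduces the corrections forced by the balancing and safety constraints, so, assuming \eqref{Optim-1} has a unique feasible minimizer, that same minimizer drives the nonnegative stage losses $\{J[t]\}$ to their jointly smallest feasible values and hence minimizes any monotone aggregation of them, in particular both $\sum_{t}J[t]$ and $\sum_{t}J[t]^{2}$, after which equality of the optimal values follows by substitution. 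I expect this passage to be the real obstacle, since stacking the horizon into per-stage measurements turns the Gaussian likelihood into a quadratic aggregation rather than the linear one, and making the two argmins provably coincide needs either the uniqueness/common-minimizer argument above or an additional structural assumption on \eqref{Optim-1}. Everything else is routine bookkeeping with Gaussian densities and the $0/\infty$ structure of $\psi$.
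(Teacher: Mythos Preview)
Your approach mirrors the paper's proof almost step for step: Bayes' rule to factor the posterior, the noninformative-prior limit to eliminate the $\log p(\boldsymbol{\theta})$ contribution, and the Gaussian likelihood evaluated at $\boldsymbol{y}=\boldsymbol{0}$ to arrive at $-\tfrac{1}{2R}\sum_t h^2\bigl(\boldsymbol{x}[t],\boldsymbol{\pi}_{\boldsymbol{\theta}}(\boldsymbol{x}[t])\bigr)$. The paper stops precisely there, asserting that ``the cost function in \eqref{MAP} is the scaled opposite of the cost function and soft constraints in \eqref{Optim-1}'' and declaring the theorem proven. It never confronts the point you isolate in your final paragraph---that $\sum_t h^2=\sum_t\bigl(J[t]+\psi(g[t])\bigr)^2$ is a \emph{quadratic} aggregation of the stage costs, not the linear sum $\sum_t J[t]$ penalized by a barrier. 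In other words, the obstacle you worry about is simply glossed over in the published proof; your derivation is already more careful, and your attempt to close the gap via Remark~1 together with a uniqueness/monotone-aggregation argument goes beyond what the paper itself supplies.
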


\begin{proof}
Using the Bayes' rule and the Markovian property of \eqref{VirtualDynamics}, we have
\begin{equation}
	\p(\boldsymbol{\theta} | \boldsymbol{y}=\boldsymbol{0}) \propto \prod_{t=k}^{k+H} \p(y[t]=0 | \boldsymbol{\theta})p(\boldsymbol{\theta}),
	\nonumber
\end{equation}
which implies
\begin{equation}
	\log \p(\boldsymbol{\theta} | \boldsymbol{y}=\boldsymbol{0}) = \sum_{t=k}^{k+H} \log \p(y[t]=0 | \boldsymbol{\theta}) + \log p(\boldsymbol{\theta}).
	\nonumber
\end{equation}
Because $\p(y[t]=0 | \boldsymbol{\theta})\propto p(v[t])\sim\mathcal{N}(0, R)$, we get
\begin{equation}
	\log \p(y[t]=0 | \boldsymbol{\theta}) \propto \frac{1}{R}h^2\left(\boldsymbol{x}[t], \boldsymbol{\pi}_{\boldsymbol{\theta}}(\boldsymbol{x}[t])\right).
	\nonumber
\end{equation}
Further, $\log p(\boldsymbol{\theta}) \rightarrow 0$ as $\boldsymbol{\Sigma}^{\boldsymbol{\theta}} \rightarrow \infty$. Putting together the above, we see that the cost function in \eqref{MAP} is the scaled opposite of the cost function and soft constraints in \eqref{Optim-1}. The theorem is thus proven.
\end{proof}
Theorem 1 suggests the equivalence between the estimation problem in \eqref{MAP} and the problem in \eqref{Optim-1} under mild conditions. However, finding an analytical solution to \eqref{MAP} is not possible, so we develop approximate solutions to \eqref{MAP}. It is important to note that the noninformative prior assumption for $\p(\boldsymbol{\theta})$ in Theorem 1 implies that the initial prior covariance should be large enough to enable the exploration of the entire search space. Next, we employ the ensemble Kalman inversion to approximately solve \eqref{MAP}.
\subsection{Parameter Estimation via Ensemble Kalman Inversion}
Tracing to ensemble Kalman filtering, the ensemble Kalman inversion method enables sampling-based approximation of a target posterior distribution \cite{IP-NK-2019}. It is effective in finding the unknown parameters of a complex system given measurement data. Its computation is derivative-free and fast by harnessing the power of sampling. Specifically, the method iteratively approximates $\p(\boldsymbol{\theta}_{\ell+1} | {\bm y})$ through
\begin{align} \label{Param-Bayes}
\p(\boldsymbol{\theta}_{\ell+1} | {\bm y}) \propto \p({\bm y} | \boldsymbol{\theta}_\ell)^{\alpha_\ell} p(\boldsymbol{\theta}_\ell),
\end{align}
where $\ell$ is the iteration index and $0\leq \alpha_{\ell} \leq 1$ is a varying hyperparameter used to temper the likelihood. As will be seen later, $\alpha_\ell$ will play the role of a stepsize parameter in the sampling-based update procedure \cite{MWR-PH-2016}. We consider that $p(\boldsymbol{\theta}_{\ell}) \sim \mathcal{N}(\boldsymbol{\bar \theta}_{\ell}, {\bm \Sigma}_l^{\boldsymbol{\theta}})$. Note that $\p({\bm y} | \boldsymbol{\theta}_{\ell})^{\alpha_{\ell}}$ has covariance $\alpha_{\ell}^{-1}\bm{R}$, where $\boldsymbol{R} = \mathrm{diag}(R, \ldots, R)$. We assume
\begin{align}\label{Joint-Gauss}
p \left( \begin{bmatrix}
\boldsymbol{\theta}_{\ell} \\ \bm{y}
\end{bmatrix}\right) \sim 
 \mathcal{N}\left(\begin{bmatrix}
 \boldsymbol{\bar \theta}_{\ell}\\ \bm{\bar y}_{\ell}
\end{bmatrix}, \begin{bmatrix}
\bm{\Sigma}_{\ell}^{\boldsymbol{\theta}} & \bm{\Sigma}_{\ell}^{\boldsymbol{\theta}\bm{y}} \\ \left(\bm{\Sigma}_{\ell}^{\boldsymbol{\theta} \bm{y}} \right)^\top & \bm{\Sigma}_{\ell}^{\bm{y}} +{\alpha_{\ell}}^{-1}\bm{R}
\end{bmatrix}\right).
\end{align}
It follows from \eqref{Joint-Gauss} that
\begin{align}\label{conditional-dist}
\p( \boldsymbol{\theta}_{\ell+1} | \bm{y}) \sim \mathcal{N} \left(\boldsymbol{\bar \theta}_{\ell+1} , \bm{\Sigma}_{\ell+1}^{\boldsymbol{\theta}}\right),
\end{align}
where 
\begin{subequations}\label{KF-update}
\begin{align}\label{KF-update-a}
\boldsymbol{\bar \theta}_{\ell+1} &= \boldsymbol{\bar \theta}_{\ell} +\bm{\Sigma}_{\ell}^{\boldsymbol{\theta} \bm{y}} \left( \bm{\Sigma}_{\ell}^{\bm{y}} + {\alpha_{\ell}}^{-1}\bm{R} \right)^{-1} \left(\bm{y} -\bm{\bar y}_{\ell} \right),\\ \label{KF-update-b}
\bm{\Sigma}_{\ell+1}^{\boldsymbol{\theta}} &= \bm{\Sigma}_{\ell}^{\boldsymbol{\theta}} +\bm{\Sigma}_{\ell}^{\boldsymbol{\theta} \bm{y}} \left( \bm{\Sigma}_{\ell}^{\bm{y}} +{\alpha_{\ell}}^{-1}\bm{R} \right)^{-1} \left(\bm{\Sigma}_{\ell}^{\boldsymbol{\theta} \bm{y}} \right)^\top.
\end{align}
\end{subequations}

Based on \eqref{conditional-dist}-\eqref{KF-update}, we can perform sampling-based approximation and update the posterior distribution. At iteration $\ell$, we have $\boldsymbol{\theta}_{\ell}^{(i)} \sim \mathcal{N}(\boldsymbol{\theta}_{\ell}, \bm{\Sigma}_{l}^{\boldsymbol{\theta}})$ for $i = 1, \ldots, N$, and then generate the sampled trajectories for the state for $k \leq t \leq k+H$ by
\begin{align}\label{Traj-Gen}
\bm {x}^{(i)}_{\ell}[t+1] = \boldsymbol{f}\left(\boldsymbol{x}^{(i)}_{\ell}[t], \boldsymbol{\pi}_{\boldsymbol{\theta}_{\ell}^{(i)}}(\boldsymbol{x}^{(i)}_{\ell}[t])\right), \quad i = 1, \ldots, N.
\end{align}
Then, the samples to approximate $\p(\boldsymbol{y}|\boldsymbol{\theta}_\ell)$ can be computed by
\begin{align}\label{Meas-Gen}
y_{\ell}^{(i)}[t] = h\left(\boldsymbol{x}^{(i)}_{\ell}[t], \boldsymbol{\pi}_{\boldsymbol{\theta}_{\ell}^{(i)}}(\boldsymbol{x}^{(i)}_{\ell}[t])\right) + v_{\ell}^{(i)}[t],
\end{align}
where $v_{\ell}^{(i)}[t] \sim \mathcal{N}(0, R)$. We now can calculate the means and covariances as below:
\begin{subequations}
	\begin{align}
		\boldsymbol{\bar \theta}_{\ell} &= \frac{1}{N}\sum_{i=1}^N \boldsymbol{\theta}_{\ell}^{(i)}, \quad \bm{\bar y}_{\ell} = \frac{1}{N}\sum_{i=1}^N \bm{y}_{\ell}^{(i)}, \label{Sample-Mean}\\
		\bm{\Sigma}^{\boldsymbol{\theta}}_{\ell} &= \frac{1}{N-1}\sum_{i=1}^N \left( \boldsymbol{\theta}_{\ell}^{(i)} -  \boldsymbol{\bar \theta}_{\ell}\right)\left( \boldsymbol{\theta}_{\ell}^{(i)} -  \boldsymbol{\bar \theta}_{\ell}\right)^{\top}, \label{Sample-Cov}\\
		\bm{\Sigma}^{\bm{y}}_{\ell} &= \frac{1}{N-1}\sum_{i=1}^N \left( {\bm y}_{\ell}^{(i)} -  \bm{\bar y}_{\ell} \right)\left( {\bm y}_{\ell}^{(i)} -  \bm{\bar y}_{\ell} \right)^{\top}, \\
		\bm{\Sigma}^{\boldsymbol{\theta}\bm{y}}_{\ell} &= \frac{1}{N-1}\sum_{i=1}^N \left( \boldsymbol{\theta}_{\ell}^{(i)} -  \boldsymbol{\bar \theta}_{\ell}\right)\left( {\bm y}_{\ell}^{(i)} -  \bm{\bar y}_{\ell} \right)^{\top}. \label{Sample-CrossCov}
	\end{align}
\end{subequations}
By \eqref{KF-update-a}, the samples to approximate $\p(\boldsymbol{\theta}_{\ell+1} | \boldsymbol{y})$ are 
\begin{align}\label{sample-update}
\boldsymbol{\theta}^{(i)}_{\ell+1} = \boldsymbol{\theta}^{(i)}_{\ell} +\bm{\Sigma}_{\ell}^{\boldsymbol{\theta} \bm{y}} \left( \bm{\Sigma}_{\ell}^{\bm{y}} + {\alpha_{\ell}}^{-1}\bm{R} \right)^{-1} \left(\bm{y} -\bm{ y}_{\ell}^{(i)} \right).
\end{align}
Generally, $\alpha_{\ell}$ takes a small value in early iterations, and increases towards 1 as the iteration moves forward. One can also use a bisection search algorithm to identify the best $\alpha_{\ell}$ for each iteration \cite{StatLetters-SD-2022}. Then, we have
\begin{align}\label{Samples-Posterior}
\boldsymbol{\theta}_{\ell+1} = \frac{1}{N}\sum_{i=1}^N \boldsymbol{\theta}^{(i)}_{\ell+1}.
\end{align}
The iteration procedure repeats itself until convergence when the difference in two consecutive iterations is less than a pre-specified tolerance $\epsilon$. This is the ensemble Kalman inversion method. The sampling-based iteration yields approximations of the posterior distribution of the unknown parameters, which can also be viewed as a search within the parameter space. The method is useful in dealing with complex nonlinear relationships from the parameters to the measurements and offers much faster computation than gradient-based optimization when dealing with many complex problems. Algorithm~\ref{EnKI-OPM} shows a summary of optimal power management based on the method. 

\begin{algorithm}[t]
\fontsize{9.2}{10}
  \caption{{\tt EnKI-OPM} Ensemble Kalman inversion-based optimal power management} \label{NMPC-EnKS}
  \begin{algorithmic}[1]
\State Set up the optimization problem as in~\eqref{Optim-1} 

\State Convert the problem to a Bayesian parameter estimation problem of $\p(\boldsymbol{\theta} | \bm{y} = \bm{0})$
 
\State Initialize by setting  $\ell = 0$, $\boldsymbol{\bar \theta}_0$ , $\bm{\Sigma}_0^{\boldsymbol{\theta}}$, and $\Delta \boldsymbol{\theta}_0$
\State Draw samples $\boldsymbol{\theta}_0^{(i)} \sim  \mathcal{N}(\boldsymbol{\bar \theta}_0, \bm{\Sigma}_0^{\boldsymbol{\theta}})$, for $i=1, \ldots, N$
\While{$\Delta \boldsymbol{\theta}_\ell \geq \epsilon$} 
\For{$t = k,\ldots, k+H$}
\State Generate $\bm{x}_{\ell}^{(i)}[t]$ using~\eqref{Traj-Gen}
\State Generate $y_{\ell}^{(i)}[t]$ as in~\eqref{Meas-Gen}
\EndFor
\State Construct $\bm{y}_{\ell}^{(i)}=\begin{bmatrix}y_{\ell}^{(i)}[k] & \dots & y_{\ell}^{(i)}[k+H]\end{bmatrix}^\top$

\State Compute $\boldsymbol{\bar \theta}_{\ell}$ and $\bm{\bar y}_{\ell}$ via~\eqref{Sample-Mean}
\State Compute $\bm{\Sigma}^{\boldsymbol{\theta}}_{\ell}$, $\bm{\Sigma}^{\bm{y}}_{\ell}$, and $\bm{\Sigma}^{\boldsymbol{\theta} \bm{y}}_{\ell}$ via~\eqref{Sample-Cov} -~\eqref{Sample-CrossCov}
\State Specify $\alpha_{\ell}$ via bisection search
\State Update $\boldsymbol{\theta}_{\ell+1}^{(i)}$ via~\eqref{sample-update}
\State Compute $\boldsymbol{\bar \theta}_{\ell+1}$ using~\eqref{Samples-Posterior}
\State Compute criteria $\Delta \boldsymbol{\theta}_{\ell+1} \leftarrow \lVert\boldsymbol{\bar \theta}_{\ell+1} - \boldsymbol{\bar \theta}_{\ell}\rVert$
\State Set $\ell \leftarrow \ell + 1$
\EndWhile

\State Extract $\boldsymbol{\hat \theta}^\ast[k]$ on convergence
\State Compute power-sharing ratios $\mu[k]$ using~\eqref{Form-1}
\State Apply power-sharing ratios to BESS

\end{algorithmic}
\label{EnKI-OPM}
\end{algorithm}
\section{Simulation Results}
\begin{table}[!t]
	\renewcommand{\arraystretch}{1.2}
	\caption{Specifications of the BESS}
	\centering
	\label{table_1}
	\resizebox{\columnwidth}{!}{
		\begin{tabular}{l l l}
			\hline\hline \\[-3mm]
			\multicolumn{1}{c}{Symbol} & \multicolumn{1}{c}{Parameter} & \multicolumn{1}{c}{Value [Unit]}  \\[1.6ex] \hline
			$ n $ & Number of battery cells & 100 \\
			$ \bar{Q} $ & Cell nominal capacity & 2.5     [A.h] \\ 
			$ R $ & Cell internal resistance & 30     [m$\Omega$] \\
			$ R_C $ & Cell internal resistance & 10    [m$\Omega$] \\
			$ [q^{\textrm{min}},q^{\textrm{max}}] $ & Cell SoC limits  & [0.05,0.95] \\ 
			$ [i^{\textrm{min}},i^{\textrm{max}}] $ & Cell current limits & [-5,5]     [A] \\ 
			$ C_{\textrm{th}} $ & Thermal capacitance & 40.23     [J/K] \\ 
			$ R_{\textrm{conv}} $ & Convection thermal resistance & 41.05     [K/W] \\ 
			$ T_{\textrm{env}} $ & Environment temperature & 298     [K] \\ 
			$ \Delta q $ & SoC balancing threshold & 1\% \\
			$ \Delta T $ & Temperature balancing threshold & 0.75    [K] \\
            $ \epsilon $ & EnKI convergence tolerance & $10^{-4}$     \\
			$ H $        & Horizon length & 10     [s]\\
			\hline\hline
		\end{tabular}
	}
\end{table}

\begin{figure}[!t]\centering
	\includegraphics[trim={2.2cm 0 2.5cm 0.5cm},clip,width=8cm]{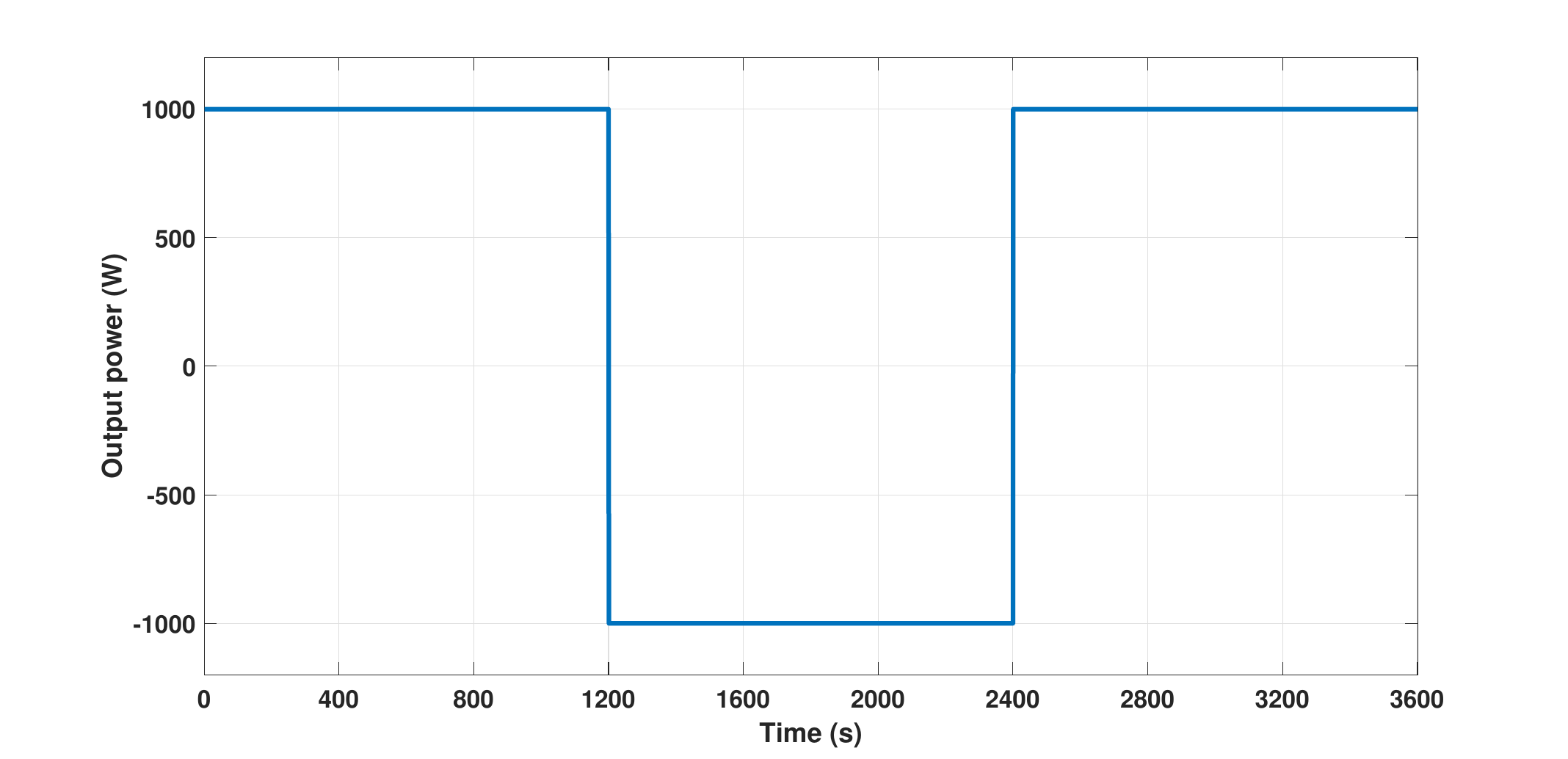}
	\caption{The output power profile.}\label{Fig_SIM_4}
\end{figure}

\begin{figure*}[t]
	    \centering
    \subfloat[\centering ]{{\includegraphics[trim={2.2cm 0 2.5cm 0.5cm},clip,width=8cm]{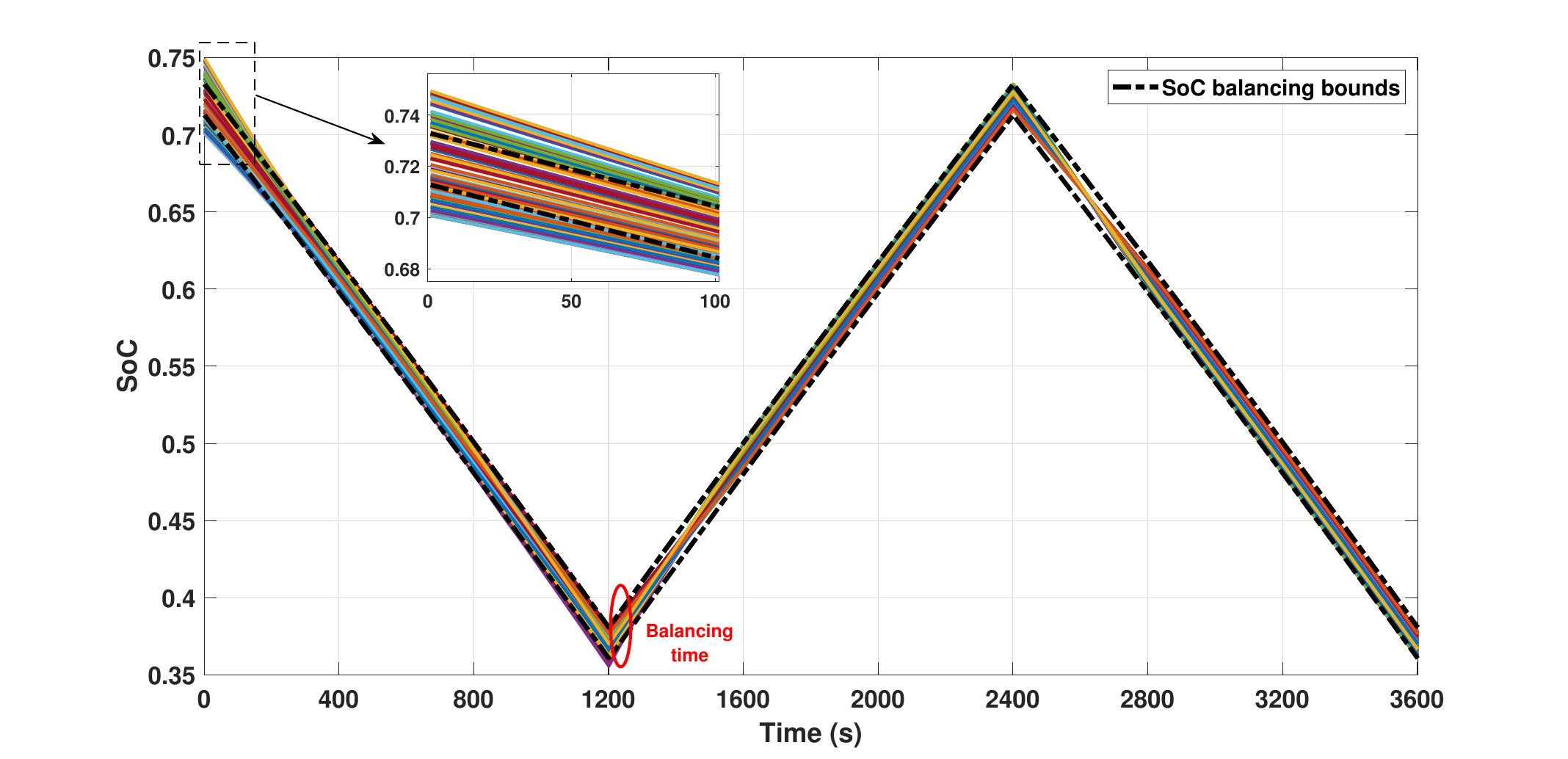} }}
	\,
    \subfloat[\centering ]{{\includegraphics[trim={2.2cm 0 2.5cm 0.5cm},clip,width=8cm]{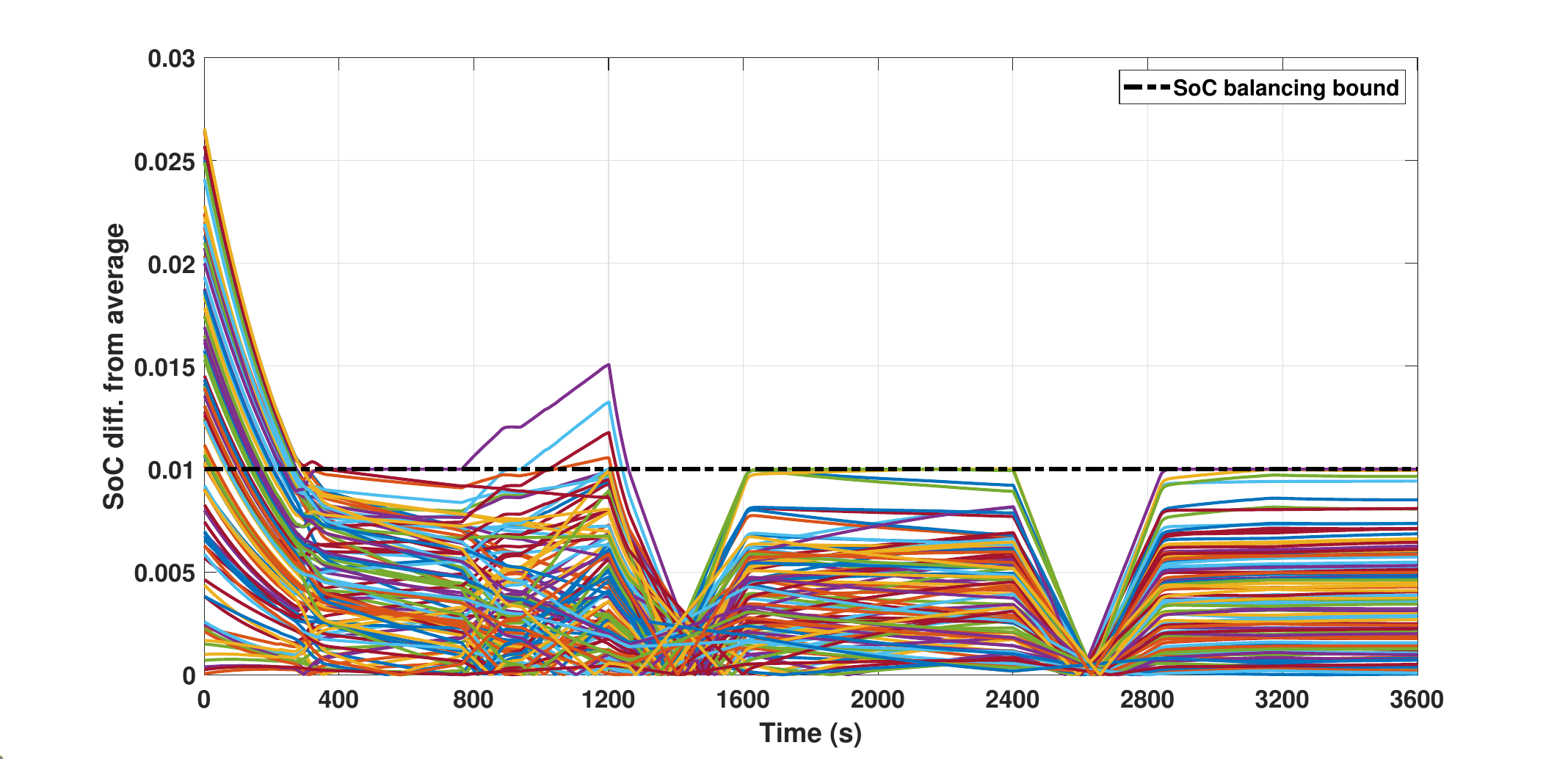} }}
    \,
    \subfloat[\centering ]{{\includegraphics[trim={2.2cm 0 2.5cm 0.5cm},clip,width=8cm]{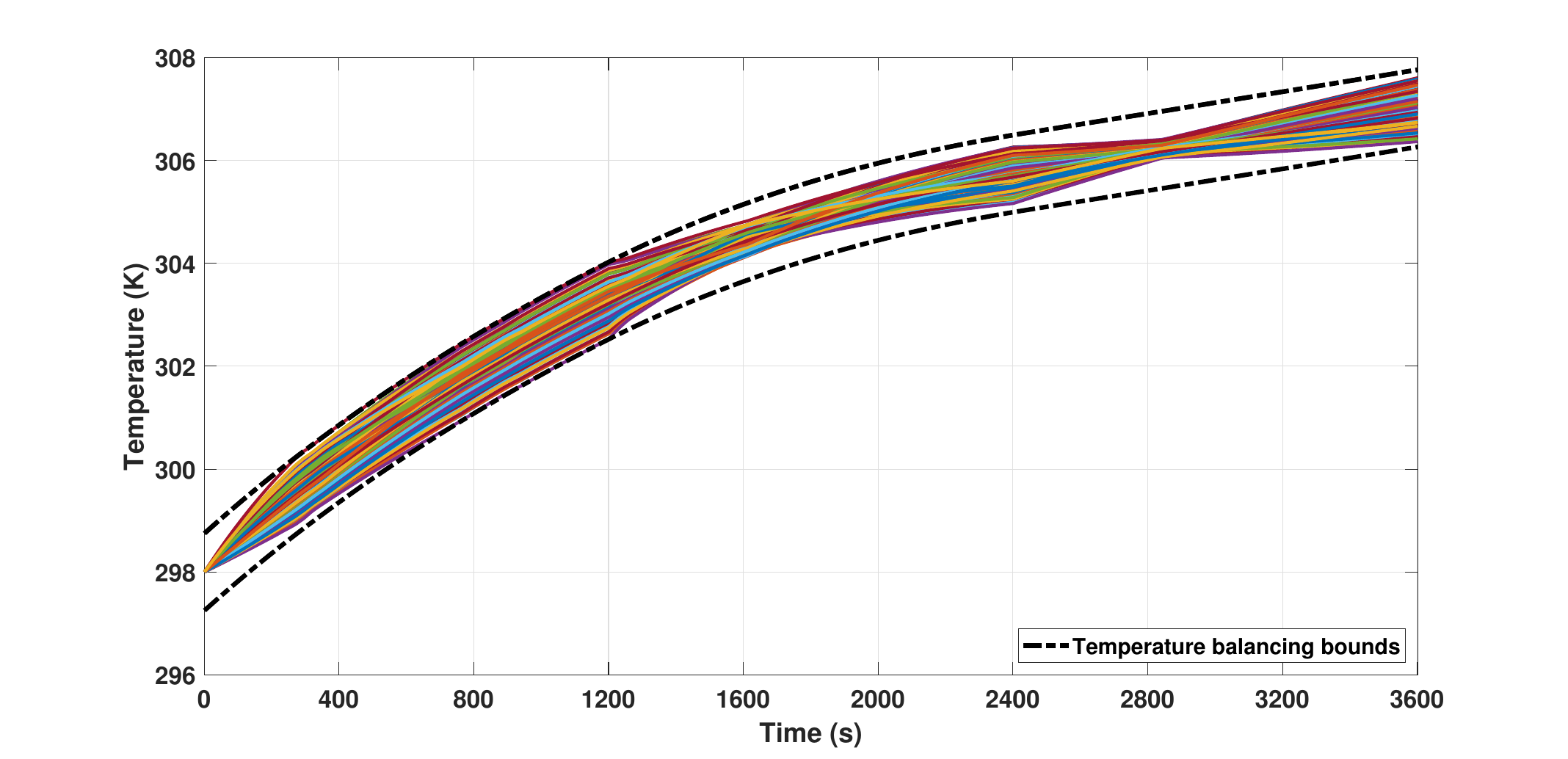} }}
    \,
    \subfloat[\centering ]{{\includegraphics[trim={2.2cm 0 2.5cm 0.5cm},clip,width=8cm]{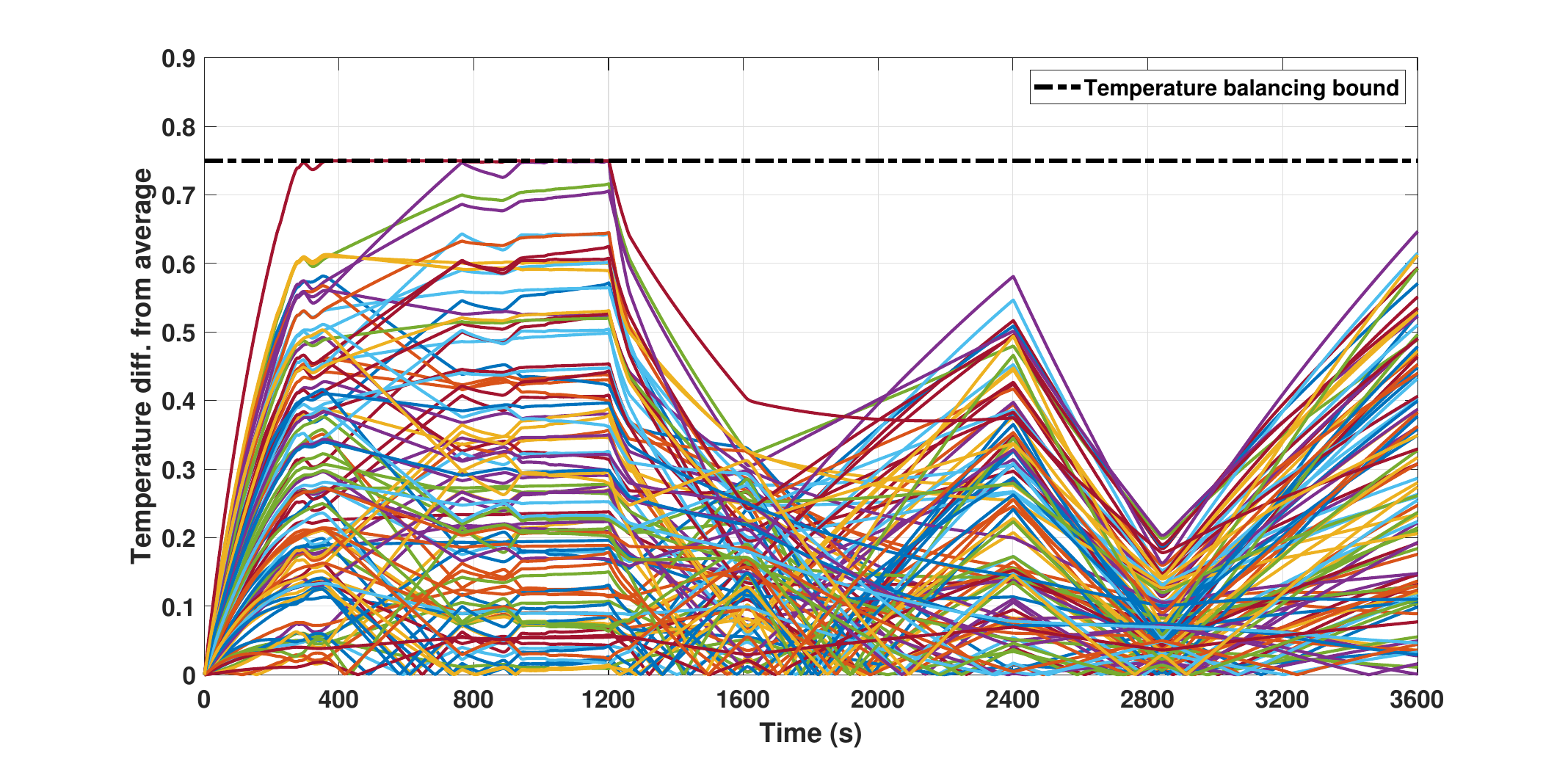} }}
    \caption{Simulation results of the SoC and temperature balancing. (a) The SoC of the cells. (b) The deviation of the cells' SoC from the average. (c) The temperature of the cells. (d) The deviation of the cells' temperature from the average.}
    \label{Fig_SIM_1}
\end{figure*}

This section presents the simulation results of the proposed efficient optimal power management on a BESS comprising 100 cells. The key specifications of the considered BESS are summarized in Table \ref{table_1}, which are based on the Samsung INR18650-25R cells. The output power profile is designed to periodically charge and discharge the BESS at 1 kW every twenty minutes, as illustrated in Fig.~\ref{Fig_SIM_4}. We simulate the BESS operation for one hour with the time steps of one second. We program with Matlab to solve the parameter estimation problem using a workstation with a 3.5 GHz Intel Core i9-10920X CPU and 128 GB of RAM.

For the purpose of implementation, we use a softplus function as the barrier function: 
\begin{equation}
	\psi(x) = \frac{1}{\alpha}\ln(1+\exp(\beta x)),
	\label{Barrier-2}
\end{equation}
where $\alpha$ and $\beta$ are chosen such that $\psi(x)$ is nearly zero when $x \leq 0$ and takes large numbers when $x>0$ \cite{ACC-AI-2021}. Furthermore, $\beta_1$ and $\beta_2$ in \eqref{Form-1} are set to 8 and 12, respectively.

In order to evaluate how well the proposed approach performs, we start by setting the cells' SoC values and internal resistances from uniform distributions $\mathcal{U}(0.7,0.75)$ and $\mathcal{U}(0.03, 0.04)$ $\Omega$, respectively. The cells' initial temperatures are set to be 298 K. The results obtained from these initializations are presented below.

In Fig.~\ref{Fig_SIM_1}, we see the performance of the proposed approach in terms of SoC and temperature balancing. For better interpretation, we also present the estimated parameters in Fig.~\ref{Fig_SIM_2}. The simulation starts with an uneven distribution of SoC across the cells. We observe in Fig.~\ref{Fig_SIM_2} that the proposed approach initially prioritizes SoC balancing by setting $\theta_1$ close to one. This results in the ensuing convergence of the cells' SoC values, as shown in Figs.~\ref{Fig_SIM_1} (a), (b). However, the cells' temperatures start to drift away from each other, and at about 300 seconds, the deviation in temperatures reaches the maximum allowed threshold (see Figs.~\ref{Fig_SIM_1} (c), (d)). At this point, the approach reduces $\theta_1$ and increases $\theta_2$ to maintain the cells' temperature within the desired range, as illustrated in Fig.~\ref{Fig_SIM_2}. 

Moving forward, after around 750 seconds of the simulation, it is no longer possible to maintain both SoC and temperature balanced. This behavior is a consequence of the infeasibility of the underlying optimization problem, where achieving both SoC and temperature balancing is not feasible. As a result, a slight deviation of the cells' SoC is allowed to keep the cells' temperature within the desired bound. This behavior continues until the charging cycle starts for the BESS at the 1200-th second.

The charging cycle allows the proposed approach to attain SoC and temperature balancing simultaneously. Looking at Fig.~\ref{Fig_SIM_2}, we observe that the proposed approach increases $\theta_1$. At roughly the 1300-th second, the cells are balanced, and the proposed approach maintains the balance from thereon. After ensuring that both the SoC and temperature are balanced, the proposed approach focuses on minimizing the power losses by reducing $\theta_1$ and $\theta_2$ after the 1300-second mark. During this period, the BESS operation allocates the output power demand among the cells based on their internal resistances to achieve the power-loss-minimized operation. This period lasts until the 1600-second mark, when the SoC balancing constraint is activated (as shown in Fig.~\ref{Fig_SIM_1} (b)). After this point, the proposed approach increases $\theta_1$ and $\theta_2$ and continues to minimize power losses while also maintaining balanced SoC and temperature among the cells.

At the time instant of 2400 seconds, the BESS transitions to a discharging cycle. This shift allows the proposed approach to enforce power loss minimization while adhering to the SoC and temperature balancing constraints. As a result, $\theta_1$ and $\theta_2$ are decreased, as shown in Fig.~\ref{Fig_SIM_2}. However, the SoC values of the cells diverge until they reach the SoC balancing threshold at around the 2850-th second. At this point, the proposed approach increases $\theta_1$ and $\theta_2$ to keep the cells within the desired balancing bounds.

\begin{figure}[!t]\centering
	\includegraphics[trim={2.2cm 0 2.5cm 0.5cm},clip,width=8cm]{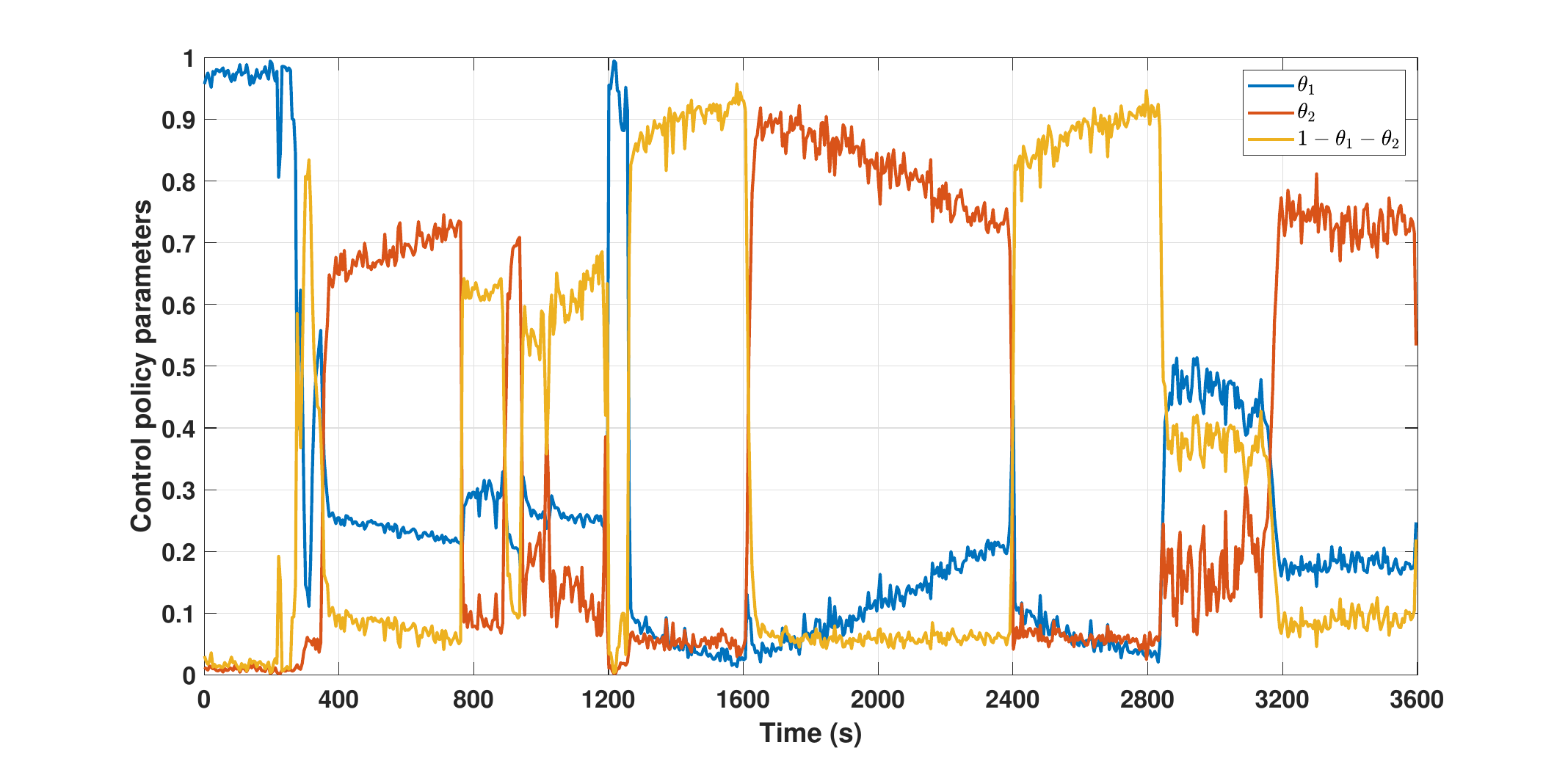}
	\caption{The estimated parameters of the control policy.}\label{Fig_SIM_2}
\end{figure}

Fig.~\ref{Fig_SIM_2} shows the parameters of the designed control policy, and based on these parameters, we use \eqref{Form-1} to calculate the power-sharing ratios of the cells, which are illustrated in Fig.~\ref{Fig_SIM_3}. These power-sharing ratios are used to distribute the output power demand among the cells. This differentiation aims to minimize the power losses while ensuring that the SoC and temperature of each cell are balanced. It is essential to observe that when the cells are almost balanced (e.g., during the time interval of 2900$\sim$3600 seconds), the power-sharing ratios converge towards $1/n$, requiring only minor adjustments to maintain the balancing condition.

\begin{figure}[!t]\centering
	\includegraphics[trim={2.2cm 0 2.5cm 0.5cm},clip,width=8cm]{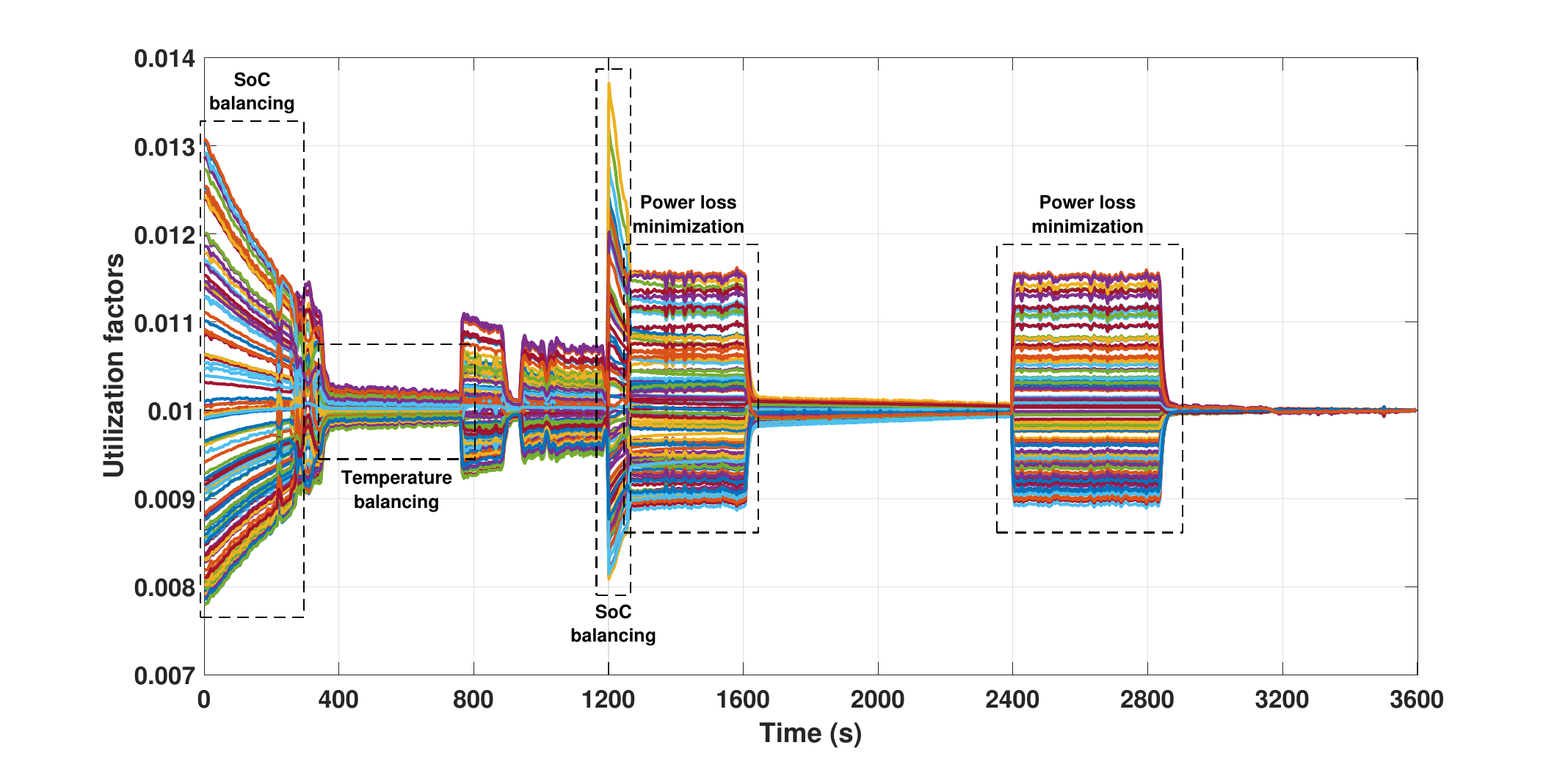}
	\caption{The evolution of the cells' power-sharing ratios.}\label{Fig_SIM_3}
\end{figure}

The objective of our proposed approach is to deliver a computationally efficient solution for the optimal power management of large-scale BESS. Table~\ref{table_2} presents the average computation times associated with our approach, considering different cell numbers and ensemble sizes for the EnKI. Our results demonstrate that the computation time for our proposed approach consistently remains below 10-second, even when handling 100 cells. In contrast, gradient-based numerical optimization \eqref{Optim-1} would typically demand several minutes for just 20 cells and would become impractically burdensome and collapse for 100 cells.

\begin{table}[!t]
	\renewcommand{\arraystretch}{1.2}
	\caption{Computation times of the proposed approach.}
	\centering
	\label{table_2}\
    \small
\begin{tabular}{c||ccc}
\multirow{2}{*}{Ensemble size} & \multicolumn{3}{c}{Number of cells} \\ \cline{2-4} 
                  & \multicolumn{1}{c|}{20} & \multicolumn{1}{c|}{50} & 100 \\ \hline \hline
          EnKI-50 & \multicolumn{1}{l|}{0.79 s} & \multicolumn{1}{l|}{1.33 s} & 3.29 s \\ \hline
          EnKI-100& \multicolumn{1}{l|}{0.99 s} & \multicolumn{1}{l|}{1.89 s} & 6.24 s \\ \hline
          EnKI-200& \multicolumn{1}{l|}{1.57 s} & \multicolumn{1}{l|}{2.89 s} & 6.37 s
\end{tabular}
\end{table}
\section{Conclusions}
As BESS becomes prevalent in various sectors of industry and economy, optimal power management becomes necessary for achieving peak performance. However, optimal power management of large-scale BESS poses significant challenges in terms of computational requirements. To overcome this bottleneck, this paper proposes a computationally efficient approach. We start by introducing power-sharing ratios for the cells, reflecting their corresponding power quota from the output demand. Subsequently, we formulate an optimal power management problem aimed at minimizing power losses, while complying with the safety, balancing, and power conservation constraints. To efficiently solve this problem, we design a parameterized control policy and then recast the optimal power management problem into a parameter estimation task. The ensemble Kalman inversion is employed to perform this parameter estimation. Through extensive simulations, we validate the effectiveness of our proposed approach. Our results demonstrate significant improvements in computation time while maintaining good accuracy.
\balance

\bibliographystyle{IEEEtran}
\bibliography{IEEEabrv,BIB}

% Generated by IEEEtran.bst, version: 1.14 (2015/08/26)
\begin{thebibliography}{10}
\providecommand{\url}[1]{#1}
\csname url@samestyle\endcsname
\providecommand{\newblock}{\relax}
\providecommand{\bibinfo}[2]{#2}
\providecommand{\BIBentrySTDinterwordspacing}{\spaceskip=0pt\relax}
\providecommand{\BIBentryALTinterwordstretchfactor}{4}
\providecommand{\BIBentryALTinterwordspacing}{\spaceskip=\fontdimen2\font plus
\BIBentryALTinterwordstretchfactor\fontdimen3\font minus
  \fontdimen4\font\relax}
\providecommand{\BIBforeignlanguage}[2]{{%
\expandafter\ifx\csname l@#1\endcsname\relax
\typeout{** WARNING: IEEEtran.bst: No hyphenation pattern has been}%
\typeout{** loaded for the language `#1'. Using the pattern for}%
\typeout{** the default language instead.}%
\else
\language=\csname l@#1\endcsname
\fi
#2}}
\providecommand{\BIBdecl}{\relax}
\BIBdecl

\bibitem{TIE-AA-2005}
A.~Affanni, A.~Bellini, G.~Franceschini, P.~Guglielmi, and C.~Tassoni,
  ``Battery choice and management for new-generation electric vehicles,''
  \emph{IEEE Transactions on Industrial Electronics}, vol.~52, no.~5, pp.
  1343--1349, 2005.

\bibitem{JPS-DR-2015}
D.~Rosewater and A.~Williams, ``Analyzing system safety in lithium-ion grid
  energy storage,'' \emph{Journal of Power Sources}, vol. 300, pp. 460--471,
  2015.

\bibitem{IECON-AB-2018}
B.~Arabsalmanabadi, N.~Tashakor, A.~Javadi, and K.~Al-Haddad, ``Charging
  techniques in lithium-ion battery charger: Review and new solution,'' in
  \emph{IECON - 44th Annual Conference of the IEEE Industrial Electronics
  Society}, 2018, pp. 5731--5738.

\bibitem{IVPPC-BJV-2014}
J.~V. Barreras, C.~Pinto, R.~de~Castro, E.~Schaltz, S.~J. Andreasen, and R.~E.
  Ara\'{u}jo, ``Multi-objective control of balancing systems for li-ion battery
  packs: A paradigm shift?'' in \emph{IEEE Vehicle Power and Propulsion
  Conference}, 2014, pp. 1--7.

\bibitem{ECC-PM-2013}
M.~Preindl, C.~Danielson, and F.~Borrelli, ``Performance evaluation of battery
  balancing hardware,'' in \emph{European Control Conference}, 2013, pp.
  4065--4070.

\bibitem{ITEC-RG-2015}
R.~Gu, P.~Malysz, M.~Preindl, H.~Yang, and A.~Emadi, ``Linear programming based
  design and analysis of battery pack balancing topologies,'' in \emph{IEEE
  Transportation Electrification Conference and Expo}, 2015, pp. 1--7.

\bibitem{IFAC-NM-2012}
N.~Murgovski, L.~Johannesson, and J.~Sj$\ddot{\textrm{o}}$berg, ``Convex
  modeling of energy buffers in power control applications,'' \emph{IFAC
  Proceedings Volumes}, vol.~45, no.~30, pp. 92--99, 2012, 3rd IFAC Workshop on
  Engine and Powertrain Control, Simulation and Modeling.

\bibitem{ITSE-PC-2016}
C.~Pinto, J.~V. Barreras, E.~Schaltz, and R.~E. Ara\'{u}jo, ``Evaluation of
  advanced control for \uppercase{L}i-ion battery balancing systems using
  convex optimization,'' \emph{IEEE Transactions on Sustainable Energy},
  vol.~7, no.~4, pp. 1703--1717, 2016.

\bibitem{ITVT-CR-2019}
R.~de~Castro, C.~Pinto, J.~Varela~Barreras, R.~E. Ara\'{u}jo, and D.~A. Howey,
  ``Smart and hybrid balancing system: Design, modeling, and experimental
  demonstration,'' \emph{IEEE Transactions on Vehicular Technology}, vol.~68,
  no.~12, pp. 11\,449--11\,461, 2019.

\bibitem{TTE-FA-2023}
A.~Farakhor, D.~Wu, Y.~Wang, and H.~Fang, ``A novel modular, reconfigurable
  battery energy storage system: Design, control, and experimentation,''
  \emph{IEEE Transactions on Transportation Electrification}, vol.~9, no.~2,
  pp. 2878--2890, 2023.

\bibitem{ITCST-CR-2021}
R.~de~Castro, H.~Pereira, R.~E. Ara\'{u}jo, J.~V. Barreras, and H.~C. Pangborn,
  ``q\uppercase{TSL}: A multilayer control framework for managing capacity,
  temperature, stress, and losses in hybrid balancing systems,'' \emph{IEEE
  Transactions on Control Systems Technology}, pp. 1--16, 2021.

\bibitem{TPEL-MT-2016}
T.~Morstyn, M.~Momayyezan, B.~Hredzak, and V.~G. Agelidis, ``Distributed
  control for state-of-charge balancing between the modules of a reconfigurable
  battery energy storage system,'' \emph{IEEE Transactions on Power
  Electronics}, vol.~31, no.~11, pp. 7986--7995, 2016.

\bibitem{ITSE-OQ-2018}
Q.~Ouyang, J.~Chen, J.~Zheng, and H.~Fang, ``Optimal cell-to-cell balancing
  topology design for serially connected lithium-ion battery packs,''
  \emph{IEEE Transactions on Sustainable Energy}, vol.~9, no.~1, pp. 350--360,
  2018.

\bibitem{ACC-FA-2023}
A.~Farakhor, Y.~Wang, D.~Wu, and H.~Fang, ``Distributed optimal power
  management for battery energy storage systems: A novel accelerated tracking
  admm approach,'' in \emph{American Control Conference}, 2023, pp. 3106--3112.

\bibitem{ENE-HH-2011}
H.~He, R.~Xiong, and J.~Fan, ``Evaluation of lithium-ion battery equivalent
  circuit models for state of charge estimation by an experimental approach,''
  \emph{Energies}, vol.~4, no.~4, pp. 582--598, 2011.

\bibitem{IP-NK-2019}
N.~B. Kovachki and A.~M. Stuart, ``Ensemble {K}alman inversion: a
  derivative-free technique for machine learning tasks,'' \emph{Inverse
  Problems}, vol.~35, no.~9, p. 095005, 2019.

\bibitem{MWR-PH-2016}
P.~L. Houtekamer and F.~Zhang, ``Review of the ensemble kalman filter for
  atmospheric data assimilation,'' \emph{Monthly Weather Review}, vol. 144,
  no.~12, pp. 4489 -- 4532, 2016.

\bibitem{StatLetters-SD-2022}
S.~Duffield and S.~S. Singh, ``Ensemble {K}alman inversion for general
  likelihoods,'' \emph{Statistics \& Probability Letters}, vol. 187, p. 109523,
  2022.

\bibitem{ACC-AI-2021}
I.~Askari, S.~Zeng, and H.~Fang, ``Nonlinear model predictive control based on
  constraint-aware particle filtering/smoothing,'' in \emph{American Control
  Conference}, 2021, pp. 3532--3537.

\end{thebibliography}

\end{document}